\documentclass[11pt]{article}
\usepackage{amsmath, amsthm, amssymb}
\usepackage{graphicx,psfrag,epsf}
\usepackage{enumerate}
\usepackage{natbib}
\usepackage{url} 
\usepackage{rotating}
\usepackage{cleveref}
\usepackage{subcaption}
\usepackage{bbm}

\usepackage{booktabs}
\usepackage{tabularx}
\usepackage{threeparttable}
\usepackage{array}

\newtheorem{proposition}{Proposition}

\newcommand{\blind}{0}

\usepackage{tikz}
\usetikzlibrary{arrows.meta,positioning}

\definecolor{cartblue}{HTML}{2878B5}
\definecolor{mdfsred}{HTML}{D93838}
\definecolor{figuregray}{HTML}{777777}

\begin{document}

\def\spacingset#1{\renewcommand{\baselinestretch}%
{#1}\small\normalsize} \spacingset{1}


\if0\blind
{
  \title{\bf Policy Targeting with Binary Classification Trees: an Application to Rural Hospital Closures}
  \author{Hongying Li\thanks{The authors thank Kun Huang as a data contributor for this paper.} \\
    Department of Statistics, Ohio State University\\
    and \\
    Lei Bill Wang\thanks{Corresponding author. For inquiry about this paper or \texttt{targetree} package, kindly email \texttt{lei.wang@austin.utexas.edu}.} \\
    Department of Economics, University of Texas at Austin}

    \date{}
    
  \maketitle
} \fi

\if1\blind
{
  \bigskip
  \bigskip
  \bigskip
  \begin{center}
    {\LARGE\bf Title}
\end{center}
  \medskip
} \fi

\bigskip
\begin{abstract}
Empirical researchers often use binary classification trees to identify subgroups at risk of adverse outcomes. We compare two classification tree algorithms in this policy-targeting context: classical classification and regression trees (CART) and the maximizing-distance final-split approach (MDFS). We establish a theoretical setting in which MDFS identifies a high-risk subgroup while CART identifies none. Applied to rural hospital closure data, MDFS targets a high-closure-probability subgroup: for-profit hospitals with more than 63 days in accounts receivable. CART misses this subgroup. To facilitate further application of these classification tree algorithms, we provide \texttt{targetree} package in Python, R, and Stata.

\end{abstract}
\noindent%
{\it Keywords:}  binary classification, policy targeting, tree algorithms, rural hospital closure

\vspace{.5cm}

\noindent%
{\it Highlights:}
\begin{itemize}
    \item We compare the identification properties of classical classification and regression trees (CART) and the maximizing-distance final-split approach (MDFS).
    \item Applied to rural hospital closure data, MDFS finds a subgroup of rural hospitals at risk of imminent closure that CART misses.
    \item We provide \texttt{targetree} package in Python, R, and Stata.
\end{itemize}

\vspace{.5cm}

\noindent%
{\it Data availability:} The data that has been used by this paper is confidential. We include open access data in \texttt{targetree} for users to test the package.

\vfill

\newpage
\spacingset{1} 

\section{Introduction}

A substantial body of applied literature studies binary classification problems with the goal of identifying subgroups at high risk of adverse outcomes. For example, \cite{varian2014big} identify applicants at high risk of being denied mortgage loans; \cite{andini2018targeting} identify households at high risk of being financially constrained; and \cite{ash2025machine} identify municipalities at high risk of corruption.

A popular approach to such problems, adopted by all three studies, is a tree algorithm.\footnote{See \cite{wang2026policyoriented} for a list of 23 empirical papers that use tree algorithms for policy targeting. Some of the papers use random forest algorithms, which we discuss towards the end of this paper.} Tree algorithms recursively partition a population into subgroups according to observed characteristics. Researchers compare each terminal subgroup's average probability of the adverse outcome with a predetermined targeting threshold. A subgroup is targeted if its average probability exceeds that threshold. \Cref{fig tree-comparison} provides two examples of using tree algorithms for policy targeting.

\begin{figure}[p]
    \centering

    \begin{subfigure}[t]{\textwidth}
        \centering
        \includegraphics[
            width=\textwidth,
            height=0.38\textheight,
            keepaspectratio
        ]{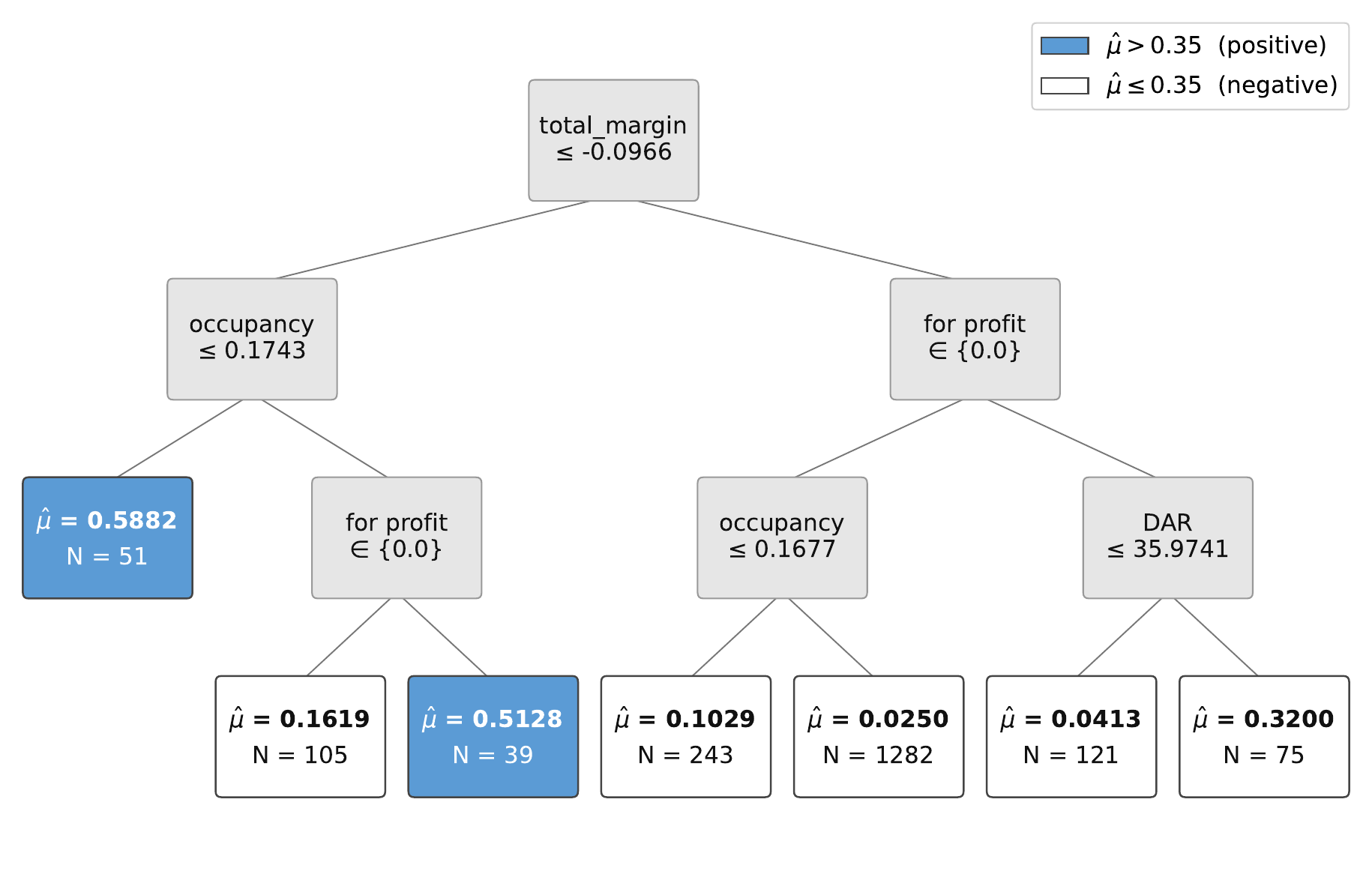}
        \caption{Targeting policy constructed using CART algorithm in \texttt{targetree}.}
        \label{fig cart}
    \end{subfigure}

    \vspace{1em}

    \begin{subfigure}[t]{\textwidth}
        \centering
        \includegraphics[
            width=\textwidth,
            height=0.38\textheight,
            keepaspectratio
        ]{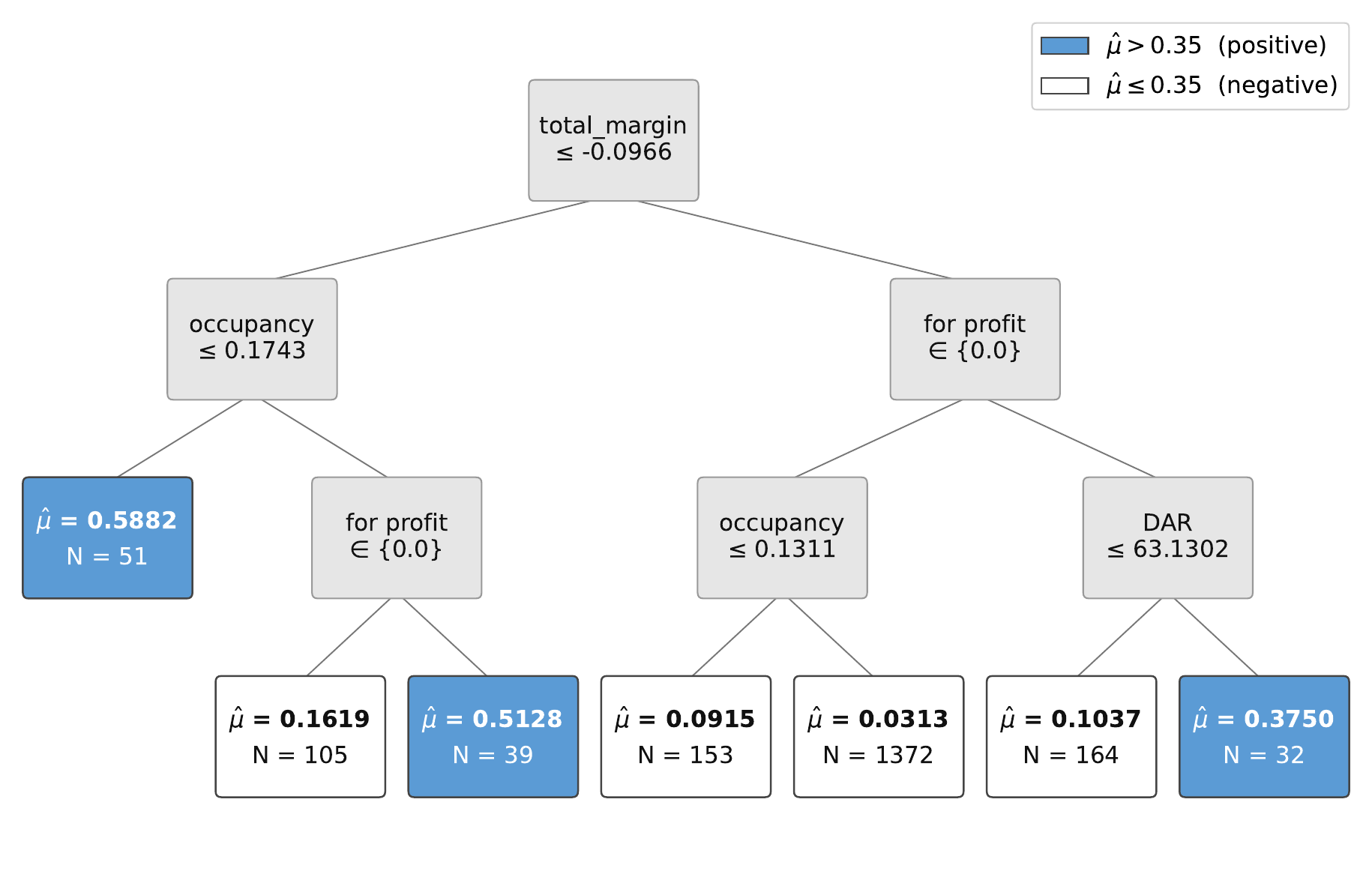}
        \caption{Targeting policy constructed using MDFS algorithm in \texttt{targetree}.}
        \label{fig mdfs}
    \end{subfigure}

    \caption{ \footnotesize
        Targeting policies constructed using CART and MDFS.
        Internal gray nodes display the splitting conditions. At each internal
        node, the left branch satisfies the displayed condition, whereas
        the right branch does not. Terminal nodes report the estimated
        subgroup closure rate, $\hat{\mu}$, and subgroup size, $N$.
        Blue terminal nodes are targeted because $\hat{\mu}>0.35$;
        white terminal nodes are not targeted.
    }
    \label{fig tree-comparison}
\end{figure}

This paper compares two binary classification tree algorithms used in this policy targeting context: classical Classification and Regression Tree (CART) and the Maximizing Distance Final Split approach (MDFS) from \cite{wang2026policyoriented}. In
\Cref{sec theoretical underpinning}, we establish a setting in which CART selects the same split regardless of the targeting threshold, whereas MDFS adaptively picks its split point depending on the predetermined threshold. As a result, MDFS identifies
a high-risk subgroup that CART misses. We formalize this comparison in an identification proposition.

To demonstrate the empirical value of our proposition, we study the rural hospital closure problem. Previous research documents the increasing rate of rural hospital closures and their consequences for rural communities \citep{kaufman2016rising,mills2024impact,mullens2024understanding}. In response, several studies develop conceptual and empirical frameworks for identifying rural hospitals at imminent risk of closure \citep{holmes2017predicting,balakrishnan2025predictors,malone2025updated}. 

\Cref{sec empirical application} applies CART and MDFS to a rural hospital closure data. As compared to CART, MDFS additionally identifies a subgroup with a probability of imminent closure above the threshold: for-profit hospitals with more than 63 days in accounts receivable. This aligns with our proposition. 

To facilitate replication and further applications, \Cref{sec conclusion} provides \texttt{targetree}, an accompanying package available in Python, R, and Stata. The package implements the class of binary classification trees developed by \citet{wang2026policyoriented}, including CART and MDFS, and visualizes their resulting targeting policies. 

\section{Illustrative identification proposition} 
Various tree algorithms use different split criterion functions to determine which covariate and at what values to split the population. This section sets up a simple example to compare the split criterion functions of CART and MDFS. 

For simplicity of this exposition, consider $(Y_i,X_{i1},X_{i2}) \overset{iid}{\sim} F$ across $i$. $Y_i \in \{0,1\}$ is the indicator for the adverse outcome. Both covariates $X_{i1}$ and $X_{i2}$ are uniformly distributed. This assumption is not too restrictive because any continuous variables can be converted into its uniformly distributed quantile when implementing a tree algorithm.


The policymaker has a predetermined threshold $c$ and would like to target subpopulations whose conditional probability of $Y_i = 1$ is greater than $c$. For example, in our empirical application, the policymaker aims to find those rural hospitals with greater than 35\% probability of closing down in two years, $c = 0.35$.

\label{sec theoretical underpinning}
\subsection{CART}
For a fixed covariate $X_{i1}$ and a fixed split point $s$, the population is divided into two subgroups, those with $X_{i1} \leq s$ and those with $X_{i1} > s$. Then, it computes the variance of $Y_i$ for each subgroup and sums the two variances weighted by their group sizes; see \Cref{eq split criterion function of CART}.
CART stores the weighted sum for different $s$ values of $X_{i1}$, and then repeats that with $X_{i2}$. Eventually, it picks the combination of covariate and $s$ that minimizes the following split criterion function 
\begin{align}
    \min_{j \in \{1,2\},s \in (0,1)} s \mu_L(s)(1 - \mu_L(s)) + (1-s) \mu_R(s)(1 - \mu_R(s)), \label{eq split criterion function of CART}
\end{align}
where $\mu_L(s) = \mathbbm{P}( Y_i = 1 \mid X_{ij} \leq s)$ and $\mu_R(s) = \mathbbm{P}( Y_i = 1 \mid X_{ij} > s)$, and $j \in \{1,2\}$. 

CART recursively splits the population using this procedure. The procedure ends when some prespecified condition is met. For example, the procedure permits a split only if each resulting child node contains at least 1\% of the population.


\subsection{Comparing CART and MDFS}
MDFS differs from CART only in its split criterion functions, which is
\begin{align}
    \max_{s \in (0,1)} s \lvert \mu_L(s) - c \rvert + (1-s) \lvert \mu_R(s) - c \rvert, \label{eq split criterion function of MDFS}
\end{align}
where $c$ is the predetermined threshold that the policymaker sets to define the high-risk subgroup.
To compare the split criterion functions \Cref{eq split criterion function of CART} and \Cref{eq split criterion function of MDFS}, we assume that (a) there is only one split and (b) drop $j \in \{1,2\}$ in \Cref{eq split criterion function of CART}. This is consistent with \cite{wang2026policyoriented} because (a) MDFS replaces only CART's \textit{final} layers' split criterion functions with \Cref{eq split criterion function of MDFS} and (b) uses the same split covariates as CART by design.

Without loss of generality, we assume that the covariate selected by CART for splitting the population is $X_{i1}$. Furthermore, we impose a parametric setup $\mathbbm{P}(Y_i = 1 \mid X_{i1}) = aX_{i1}$, where $1 > a > c > \frac{3}{4}a> 0$. 

Under this setup, the split point of CART is always 0.5 for any value of $a$ and $c$ (proof in Online Appendix \ref{proof for CART}). The $X_{i1} < 0.5$ subgroup has average $\mu_L(s^{CART}) = \frac{1}{4}a < c$; the $X_{i1} > 0.5$ subgroup has average $\mu_R(s^{CART}) = \frac{3}{4}a < c$. Neither subgroup is targeted. \Cref{fig cart mdfs theory} depicts this targeting policy.

\begin{figure}[!htbp]
\centering
\resizebox{\linewidth}{!}{%
\begin{tikzpicture}[
    font=\small,
    >=Latex,
    axis/.style={
        ->,
        semithick,
        draw=black!85
    },
    guide/.style={
        densely dashed,
        line width=0.8pt
    },
    response/.style={
        black!85,
        very thick
    },
    branch/.style={
        draw=figuregray,
        semithick
    },
    split node/.style={
        rectangle,
        rounded corners=2pt,
        draw=#1,
        line width=0.9pt,
        fill=white,
        minimum width=3cm,
        minimum height=0.85cm,
        align=center,
        inner sep=4pt
    },
    leaf node/.style={
        rectangle,
        rounded corners=2pt,
        draw=#1,
        line width=0.9pt,
        fill=white,
        text width=3.2cm,
        minimum height=1.05cm,
        align=center,
        inner sep=4pt
    }
]


\node[font=\bfseries] at (-0.55,5.55) {(a)};

\draw[axis]
    (0,0) -- (6,0)
    node[below=8pt] {$X_{i1}$};

\draw[axis] (0,0) -- (0,5.4);

\node[rotate=90] at (-1,2.7)
    {$\mathbbm{P}(Y_i=1\mid X_{i1})$};

\draw[response]
    (0,0) -- (5.35,5)
    node[
        pos=0.45,
        above,
        sloped
    ]
    {$\mathbbm{P}(Y_i=1\mid X_{i1})=aX_{i1}$};

\coordinate (cartpoint) at (2.675,2.5);

\draw[guide,cartblue]
    (2.675,0) -- (cartpoint);


\fill[cartblue]
    (cartpoint) circle (2.2pt);



\coordinate (cartpoint) at (0,3.75);

\fill[cartblue]
    (cartpoint) circle (2.2pt);

\coordinate (cartpoint) at (0,1.25);

\fill[cartblue]
    (cartpoint) circle (2.2pt);

\coordinate (mdfspoint) at (4.28*4.2/4,4.2);

\draw[guide,mdfsred]
    (4.28*4.2/4,0) -- (mdfspoint);

\draw[guide,black]
    (0,4.2) -- (mdfspoint);

\fill[mdfsred]
    (mdfspoint) circle (2.2pt);

\coordinate (mdfspoint) at (0,4.2);

\fill[black]
    (mdfspoint) circle (2.2pt);

\coordinate (mdfspoint) at (0,4.6);

\fill[mdfsred]
    (mdfspoint) circle (2.2pt);

\coordinate (mdfspoint) at (0,2.1);

\fill[mdfsred]
    (mdfspoint) circle (2.2pt);
    
\node[below] at (0,0) {$0$};

\node[
    below,
    text=cartblue,
    font=\bfseries
] at (2.675,0) {$0.5$};

\node[
    below,
    text=mdfsred,
    font=\bfseries
] at (4.28*4.2/4,0) {$c/a$};

\node[below] at (5.35,0) {$1$};

\node[left] at (0,0) {$0$};


\node[
    left,
    text=cartblue,
    font=\bfseries
] at (0,3.75) {$\frac{3}{4}a$};

\node[
    left,
    text=cartblue,
    font=\bfseries
] at (0,1.25) {$\frac{1}{4}a$};

\node[
    left,
    text=black,
    font=\bfseries
] at (0,4.2) {$c$};

\node[
    left,
    text=mdfsred,
    font=\bfseries
] at (0,4.6) {$\frac{a+c}{2}$};

\node[
    left,
    text=mdfsred,
    font=\bfseries
] at (0,2.1) {$\frac{c}{2}$};

\node[left] at (0,5) {$a$};

\node[
    above,
    text=cartblue,
    font=\bfseries
] at (2.675,-1.2) {$s^{CART}$};

\node[
    above,
    text=mdfsred,
    font=\bfseries
] at (4.28*4.2/4,-1.2) {$s^{MDFS}$};


\node[font=\bfseries] at (7,5.55) {(b)};

\node[
    text=cartblue,
    font=\bfseries\large,
    anchor=west
] at (6.55,4.55) {CART};

\node[
    split node=cartblue
] (cartroot) at (10.6,4.35)
    {$X_{i1}\leq 0.5$};

\node[
    leaf node=cartblue
] (cartleft) at (8.35,2.65)
    {$\mu_L=a/4<c$\\
     Not targeted};

\node[
    leaf node=cartblue
] (cartright) at (12.85,2.65)
    {$\mu_R=3a/4<c$\\
     Not targeted};

\draw[branch]
    (cartroot.south) -- (cartleft.north);

\draw[branch]
    (cartroot.south) -- (cartright.north);

\node[
    text=mdfsred,
    font=\bfseries\large,
    anchor=west
] at (6.55,1.05) {MDFS};

\node[
    split node=mdfsred
] (mdfsroot) at (10.6,0.85)
    {$X_{i1}\leq c/a$};

\node[
    leaf node=mdfsred
] (mdfsleft) at (8.35,-0.85)
    {$\mu_L=c/2<c$\\
     Not targeted};

\node[
    leaf node=mdfsred
] (mdfsright) at (12.85,-0.85)
    {$\mu_R=(a+c)/2>c$\\
     Targeted};

\draw[branch]
    (mdfsroot.south) -- (mdfsleft.north);

\draw[branch]
    (mdfsroot.south) -- (mdfsright.north);

\end{tikzpicture}%
}

\caption{
    Graphical comparison of the population splits selected by CART and MDFS. Panel~(a) shows the split points selected by the two methods on the horizontal axis in blue and red, respectively. It also shows the subgroup probabilities of CART and MDFS on the vertical axis in blue and red, respectively. Panel~(b) shows the
    resulting subgroup probabilities and targeting decisions.
}
\label{fig cart mdfs theory}
\end{figure}

MDFS' split point depends on $a$ and $c$, $s^{MDFS} = \frac{c}{a}$ (proof in Online Appendix \ref{proof of proposition}). The split point is optimal as it perfectly divides the population into those with $\mathbbm{P}(Y_i = 1 \mid X_{i1}) \leq c$ and $\mathbbm{P}(Y_i = 1 \mid X_{i1}) > c$. The $X_{i1} \leq \frac{c}{a}$ subgroup has $\mu_L(s^{MDFS}) = \frac{c}{2} < c$, so this subgroup is not targeted; the $X_{i1} > \frac{c}{a}$ subgroup has $\mu_R(s^{MDFS}) = \frac{a+c}{2} > c$, so this subgroup is targeted. Hence, MDFS identifies an additional high-risk subgroup for targeting. \Cref{fig cart mdfs theory} depicts this targeting policy. We summarize this comparison between the two targeting policies as the following proposition.

\begin{proposition} \label{proposition parametric example}
Suppose that the chosen split covariate $X_{i1}\sim\operatorname{Uniform}(0,1)$ and
$\mathbbm{P}(Y_i=1\mid X_{i1})=aX_{i1}$, where $1\geq a>c>\frac{3}{4}a>0$. MDFS identifies the optimal split point $s^{\mathrm{MDFS}}=\frac{c}{a}$.
Consequently, the $X_{i1} > \frac{c}{a}$
subgroup produced by MDFS has
$\mu_R(s^{MDFS}) > c$ and is therefore targeted. CART misses this subgroup.
\end{proposition}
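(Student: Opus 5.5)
The plan is to compute the MDFS criterion \eqref{eq split criterion function of MDFS} in closed form and maximize it directly. The CART half of the statement then follows from the CART split point $s^{CART}=0.5$ (Online Appendix \ref{proof for CART}), which gives subgroup means $a/4$ and $3a/4$, both below $c$.

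First, since $X_{i1}\sim\operatorname{Uniform}(0,1)$ and $\mathbbm{P}(Y_i=1\mid X_{i1})=aX_{i1}$, averaging over each side of the split gives
\begin{align*}
\mu_L(s)=\frac{as}{2}, \qquad \mu_R(s)=\frac{a(1+s)}{2}.
\end{align*}
Second, I determine the signs inside the absolute values.

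\emph{Left term.} Because $\mu_L(s)\le a/2<\frac{3}{4}a<c$, it always equals $s\,(c-as/2)$.

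\emph{Right term.} We have $\mu_R(s)>c$ exactly when $s>s_0:=2c/a-1$. The assumption $a>c>\frac34 a$ gives $s_0\in(\tfrac12,1)$. Moreover
\begin{align*}
\frac{c}{a}-s_0=1-\frac{c}{a}>0,
\end{align*}
so the candidate $c/a$ lies in the region $(s_0,1)$.

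This splits the criterion $G(s)$ into two pieces.

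\emph{Case $s\le s_0$.} Both absolute values open with the $c-\mu$ sign, and $G(s)$ telescopes:
\begin{align*}
G(s)=sc-\frac{as^2}{2}+(1-s)c-\frac{a(1-s^2)}{2}=c-\frac{a}{2},
\end{align*}
which is constant in $s$. This is the step to watch: the flat region means the maximizer is not determined by first-order conditions alone, so I must compare levels across the two regions.

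\emph{Case $s>s_0$.} Here
\begin{align*}
G(s)=2cs-as^2+\frac{a}{2}-c,
\end{align*}
a strictly concave quadratic. Its stationary point is $s=c/a\in(s_0,1)$, with value
\begin{align*}
G(c/a)=\frac{c^2}{a}+\frac{a}{2}-c.
\end{align*}

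Comparing the two regions,
\begin{align*}
G(c/a)-\Bigl(c-\frac a2\Bigr)=\frac{c^2}{a}-2c+a=\frac{(a-c)^2}{a}>0,
\end{align*}
because $c<a$. Hence $s^{MDFS}=c/a$ is the unique maximizer over $(0,1)$. Continuity of $G$ at $s_0$ and strict concavity on $(s_0,1)$ rule out any other maximizer.

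Finally, substituting $s=c/a$ gives
\begin{align*}
\mu_R(c/a)=\frac{a+c}{2}>c \quad\text{(since $a>c$)}, \qquad \mu_L(c/a)=\frac{c}{2}<c.
\end{align*}
So the right subgroup is targeted and the left is not. The split is optimal in the stated sense because $aX_{i1}>c$ if and only if $X_{i1}>c/a$. Under CART, neither leaf exceeds $c$, so no subgroup is targeted and CART misses the subgroup $X_{i1}>c/a$.

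One edge case remains: the statement allows $a=1$, while the setup text assumed $a<1$. Nothing in the computation above uses $a<1$, so $a=1$ is covered.
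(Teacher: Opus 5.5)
Your proof is correct and follows essentially the same route as the paper's appendix proof. You split the MDFS objective at $s_0=2c/a-1$, note it equals the constant $c-a/2$ on $(0,s_0]$ and the strictly concave quadratic $2cs-as^2+a/2-c$ on $(s_0,1)$, locate the maximizer $c/a$ there, compare levels via $(a-c)^2/a>0$, and rely on $s^{CART}=0.5$ for the CART half. You also state the region boundary correctly as $2c/a-1$, whereas the paper's piecewise display garbles it as $2sa/c-1$ and $2a/c-1$.
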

\Cref{proposition parametric example} is an intuitive illustration of Theorem 4.4 and Remark 4.7 in \cite{wang2026policyoriented}. Theorem 4.4 generalizes the result that MDFS identifies the optimal split point to a large class of nonparametric $\mathbbm{P}(Y_i = 1 \mid X_{i1})$, which encompass any monotonic differentiable function that has a single crossing over $c$.

\section{Application to rural hospital closure} \label{sec empirical application}
\subsection{Data}
We merge the proprietary RAND hospital data with the open-access Area
Health Resources Files and rural hospital closure records. Online Appendix \ref{sec data processing}  details how we construct the final sample. The sample contains hospital identifiers, geographic
information, closure years if closed within sample period, and measures of hospital finances,
organization, government reimbursement, and local market conditions. It covers 2001--2022 and includes 1,916 hospitals, of which 153 closed during the sample period. Each hospital is observed exactly one time in the sample. That one observation of a hospital contains covariates in year $t$ and the hospital open/closure status in year $t+2$. \Cref{table summary_predictors} in Online Appendix \ref{sec data processing} reports
summary statistics for the sample's covariates.

\subsection{Results}
We feed the merged data to the CART and MDFS algorithms in \texttt{targetree}. We set $c = 0.35$, depth of 3, and specify that terminal subgroup has node size greater than 30.

\Cref{fig cart} depicts the targeting policies constructed by CART. CART targets two subgroups. The first subgroup are those with total margin below -9.66\% and occupancy below 17.43\%. The second subgroup are those with total margin below -9.66\%, occupancy higher than 17.43\%, are for-profit. The policy suggests that among all low total margin hospital, non-profit and for-profit hospitals should be differentially targeted. \textit{All} for-profit hospitals with total margin below -9.66\% should be targeted, whereas among non-profit hospitals with total margin below -9.66\%, \textit{only} those with occupancy lower than 17.43\% should be targeted.

\Cref{fig mdfs} depicts the targeting policies constructed by MDFS. MDFS targets three subgroups. The first two subgroups are identical to CART's targeting subgroups. Consistent with the proposition, MDFS additionally targets those hospitals with total margin greater than -9.66\%, are for-profit, and has more than 63.13 days in accounts receivable (DAR). 
The additional targeting subgroup indicates that  not only does the eventual profitability (total margin) matter; the timeline (DAR) for a for-profit hospital to become profitable also matters.

\section{\texttt{targetree} package} \label{sec conclusion}
To facilitate further applications of tree algorithms to policy targeting, we provide \texttt{targetree}, a package that is available in Python, R, and Stata. As shown the empirical application, \texttt{targetree} can be used for two purposes: first, constructing interpretable targeting policies, each targeted subgroups can be clearly interpreted as an intersection of a small number of split rules; second, it can highlight the policy relevance of prior causal/associative findings, for example, \cite{holmes2017predicting} uses cash flow as one of the key variables in the rural hospital financial distress model, MDFS' targeting policies show that cash flow varible (DAR) is critical for accurately targeting for-profit hospitals.

In addition to CART and MDFS, \texttt{targetree} implements two extension tree algorithms, one of which leverages the random forest to potentially improve the targeting accuracy. More details of these two extensions can be found in Online Appendix \ref{sec two extensions}.

\bibliographystyle{chicago}

\bibliography{Bibliography}

\newpage
\appendix
\begin{center}
    \Large
    \textbf{Online Appendix}
\end{center}

\section{Proofs} 
\subsection{Proof of $s^{CART} = 0.5$}\label{proof for CART}
\begin{proof}
    Substitute $\mathbbm{P}(Y_i = 1 \mid X_{i1}) = aX_{i1}$ into \Cref{eq split criterion function of CART}, we minimize
\begin{align*}
    s\frac{as}{2}(1 - \frac{as}{2}) + (1-s) \frac{as+a}{2} (1 - \frac{as+a}{2}).
\end{align*}
The first order condition of this minimization problem is $\frac{a^2}{4}(2s-1) = 0$, which outputs a stationary point at $s^{CART} = 0.5$ and the second order derivative is $\frac{a^2}{2} > 0$. Hence, $s^{CART} = 0.5$ is the argmin of \Cref{eq split criterion function of CART}. 
\end{proof}
\subsection{Proof of $s^{MDFS} = \frac{c}{a}$} \label{proof of proposition}
\begin{proof}
Substitute $\mathbbm{P}(Y_i = 1 \mid X_{i1}) = aX_{i1}$ into \Cref{eq split criterion function of MDFS}, we maximize
\begin{align*}
    s \left\lvert \frac{as}{2} - c\right\rvert + (1-s) \left\lvert \frac{as+a}{2} - c \right\rvert.
\end{align*}
Because $as/2 < a/2 < \frac{3}{4}a <c$ for every $s\in(0,1)$,
the sign of the first term inside the absolute value does not change.
The sign of the second term changes at $s=2c/a-1$. Thus, the MDFS
objective, denoted here as $Q(s)$, can be written as
\[
Q(s)=
\begin{cases}
c-\dfrac{a}{2},
    & 0<s\leq 2sa/c-1, \\[6pt]
\dfrac{a}{2}-c+2cs-as^2,
    & 2a/c-1<s<1.
\end{cases}
\]
On the second region, $Q'(s)=2c-2as$ and $Q''(s)=-2a<0$.
Its unique maximum is therefore $s=c/a$. Moreover,
$c/a>2c/a-1$ because $c/a<1$, so this point lies in the
second region. Finally,
\[
Q\left(\frac{c}{a}\right)
-\left(c-\frac{a}{2}\right)
=\frac{(a-c)^2}{a}>0.
\]
Hence, the global maximizer is uniquely given by
$s^{\mathrm{MDFS}}=c/a$.
\end{proof}

\section{Data processing} \label{sec data processing}
We impute missing numerical covariates first using the hospital-specific
median across available years, then the corresponding state-year median
when a hospital-specific median is unavailable, and finally the
full-sample median. We apply the analogous procedure to categorical
covariates using the mode.

The data cover 2001--2022 and include 153 rural hospitals that closed
during the sample period. We construct an analytic sample containing one
observation per hospital. For each closed hospital, we retain its
observation two years before closure. For each hospital that remains open
throughout the sample period, we randomly select one year of observation.
This procedure reduces the 37,703 open hospital-year observations to
1,763 hospital-level observations. The resulting analytic sample contains 1,916 hospitals, of
which 153 closed during the sample period. \Cref{table summary_predictors} reports
summary statistics for the sample's covariates.

\begin{table}[!htbp]
\centering
\caption{Summary Statistics}
\label{table summary_predictors}
\begin{threeparttable}
\setlength{\tabcolsep}{4pt}
\renewcommand{\arraystretch}{1.15}

\begin{tabularx}{\textwidth}{
    >{\raggedright\arraybackslash}p{3.5cm}
    >{\raggedleft\arraybackslash}p{1.5cm}
    >{\raggedleft\arraybackslash}p{1.5cm}
    >{\raggedleft\arraybackslash}p{1.5cm}
    >{\raggedright\arraybackslash}X
}
\toprule
Variable
    & Mean
    & Median
    & Std. dev.
    & Definition and notes \\
\midrule

\multicolumn{5}{l}{\textit{Financial performance}} \\[2pt]

Days in accounts receivable (DAR)
    & 55.160 & 46.830 & 64.797
    & Average days it takes to collect payment for services rendered \\

Inpatient margin
    & 0.838 & 0.483 & 2.220
    & Inpatient revenue/Inpatient cost - 1 \\

Outpatient margin
    & 2.253 & 1.588 & 2.346
    & \\


Total margin
    & 0.020 & 0.032 & 0.223
    &  \\

Negative equity
    & 0.173 & 0.000 & 0.379
    & Equity = Total asset - total liability $<$ 0 \\

Equity decline
    & 0.271 & 0.000 & 0.445
    & Equity decline $>$ 20\% over two years  \\

\addlinespace[4pt]
\multicolumn{5}{l}{\textit{Government reimbursement}} \\[2pt]

Medicare share
    & 0.590 & 0.600 & 0.176
    & Medicare as percentage of service \\

Medicaid service
    & 0.264 & 0.000 & 0.441
    & Provide Medicaid inpatient service or no? \\

\addlinespace[4pt]
\multicolumn{5}{l}{\textit{Organizational characteristics}} \\[2pt]

Beds
    & 45.589 & 25.000 & 51.991
    &  \\

For-profit status
    & 0.128 & 0.000 & 0.335
    &  \\

Occupancy
    & 0.356 & 0.334 & 0.188
    & Percentage of beds filled by patients \\

Critical Access Hospital (CAH)
    & 0.579 & 1.000 & 0.494
    &  \\

\addlinespace[4pt]
\multicolumn{5}{l}{\textit{Market characteristics}} \\[2pt]

Population (in thousands)
    & 25.289 & 19.116 & 22.070
    & County population \\

Proportion in poverty
    & 0.151 & 0.142 & 0.056
    &  County poverty rate\\

Hospitals in county
    & 2.048 & 1.000 & 4.594
    & Number of neighboring hospitals \\

\bottomrule
\end{tabularx}

\begin{tablenotes}[flushleft]
\footnotesize
\item \textit{Notes:}
The table reports summary statistics for the covariates used to
construct the CART and MDFS.
All statistics are calculated using the final analytic sample after data
cleaning, imputation, and sample selection. Most of the variables follow the definitions in Table 1 of \cite{holmes2017predicting}. 
\end{tablenotes}
\end{threeparttable}
\end{table}

\section{\texttt{targetree}}\label{sec two extensions}
We provide \texttt{targetree} package in Python, R, and Stata. 
\begin{itemize}
    \item Python: \texttt{https://github.com/lhy-0594/targetree-python/tree/main}
    \item R: \texttt{https://github.com/lhy-0594/targetree-r}
    \item Stata: \texttt{https://github.com/lhy-0594/targetree-stata}
\end{itemize}
The package implements the class of binary classification tree algorithms in \cite{wang2026policyoriented}, including CART and MDFS.

\subsection{PFS}
\texttt{targetree} includes an additional tree algorithm named penalized final split approach (PFS) whose split criterion function is a convex combination of \Cref{eq split criterion function of CART} and \Cref{eq split criterion function of MDFS}. 
\begin{align*}
    \min_{s \in (0,1)} (1 - \lambda) \Cref{eq split criterion function of CART} - \lambda \Cref{eq split criterion function of MDFS} 
\end{align*}
where $\lambda \in [0,1]$. When $\lambda = 0$, PFS is equivalent to CART; when $\lambda = 1$, PFS is equivalent to MDFS. PFS outperforms CART under a more general condition than MDFS, but it does not attain the optimal split point like MDFS, see Theorem 5.1 in \cite{wang2026policyoriented}.

\subsection{Knowledge distillation}
Tree algorithms tend to perform well in an ensemble. Instead of building one tree, usually many trees are constructed simultaneously using the same data. Then, the trees would each make a prediction of the binary decision, and by majority voting, such a ``random forest'' algorithm makes a final prediction. Though the random forest algorithm has been proven to be effective for prediction, it does not output a simple targeting policy that can be communicated concisely. Some works have proposed using random forest as a teacher model to extract knowledge from the data, and then use CART as a student model to distill the complex knowledge that the teacher model stores \citep{dao2021knowledge}. Such a method is called ``knowledge distillation''. We include two knowledge distillation algorithms in \texttt{targetree} Python package. Both use random forest as the teacher model. One of them uses CART as the student model; the other uses MDFS as the student model.

\end{document}